\newcommand{\ourmethod}{SONNI\xspace}
\newcommand{\newatk}{Silver Platter\xspace}
\newcommand{\merc}{server\xspace}
\newcommand{\client}{client\xspace}
\newcommand{\modelowner}{provider\xspace}
\newcommand{\Client}{Client\xspace}
\newcommand{\Modelowner}{Provider\xspace}
\newcommand{\pr}{\ensuremath{\mathcal{P}}\xspace}
\newcommand{\cl}{\ensuremath{\mathcal{C}}\xspace}
\newcommand{\se}{\ensuremath{\mathcal{S}}\xspace}
\newtheorem{theorem}{Theorem}
\begin{document}

\title{SONNI: Secure Oblivious Neural Network Inference}

\author{\authorname{Luke Sperling and Sandeep S. Kulkarni }
\affiliation{Department of Computer Science and Engineering \\
Michigan State University}
\email{\{sperli14, sandeep\}@msu.edu }
}
\keywords{Fully Homomorphic Encryption, MLaaS, Privacy.}

\bibliographystyle{apalike}
\abstract{
    In the standard privacy-preserving Machine learning as-a-service (MLaaS) model, the client encrypts data using homomorphic encryption and uploads it to a server for computation. The result is then sent back to the client for decryption. It has become more and more common for the computation to be outsourced to third-party servers. In this paper we identify a weakness in this protocol that enables a completely undetectable novel model-stealing attack that we call the  \newatk attack. This attack works even under multikey encryption that prevents a simple collusion attack to steal model parameters.
    We also propose a mitigation that protects privacy even in the presence of a malicious server and malicious client or model provider (majority dishonest). When compared to a state-of-the-art but small encrypted model with 32k parameters, we preserve privacy with a failure chance of $1.51 \times 10^{-28}$  while batching capability is reduced by $0.2\%$. 

    Our approach uses a novel results-checking protocol that ensures the computation was performed correctly without violating honest clients' data privacy. 
    Even with collusion between the client and the server, they are unable to steal model parameters.
    Additionally, the model provider cannot learn any client data if maliciously working with the server.

\onecolumn \maketitle \normalsize \setcounter{footnote}{0} \vfill

\section{Introduction \label{sec:intro}}



Training ChatGPT3\cite{brown2020language} took $3.14 \times 10^{23}$ floating point operations or an estimated $\$4.6$M if replicated using a Tesla V100 cloud instance\footnote{https://lambdalabs.com/blog/demystifying-gpt-3}. QuillBot uses a variety of models for grammar checking, paraphrasing, etc, some of which contain billions of parameters and a took team of engineers to develop\footnote{https://quillbot.com/blog/compressing-large-language-generation-models-with-sequence-level-knowledge-distillation/}.


One way for a \modelowner to recuperate the cost of generating machine learning models is to provide machine learning-as-a-service (MLaaS) where the \modelowner permits a client to compare their data with the model. For example, a hospital could pay to have automated screenings for skin cancer. Moreover, millions of users pay for their queries to be answered by ChatGPT or to have their writings improved by QuillBot's premium offerings. This incentivizes the costly training process with financial gain.  


The use of MLaaS generates an issue of privacy for the \client, as the data it is using is private and the \client is concerned about possible leak of this private data to \modelowner. The issue of privacy is especially important to \client as the data is often one-of-a-kind and cannot be replaced/regenerated, such as in the case of medical or biometric data. For this reason, for several years, MLaaS systems have been designed with \client data privacy in mind. 


However, as MLaaS becomes more widespread and the models become more complex, the \modelowner lacks the computational power needed to provide MLaaS on its own. Especially in the context of availability of third-party servers to provide computational capability, it is in the interest of the \modelowner to outsource the MLaaS service to a \merc that will only provide the computational capability. Even in this context, the client privacy is crucial. Additionally, in this context, the privacy of the \modelowner is critical as well. Specifically, since the development of the model requires significant computational and financial investment, it is critical that the model remains a secret and not revealed to anyone including the \client and the \merc. 

Our work focuses on using oblivious neural networks with mulitikey encryption \cite{chen2019efficient}.
Here, both the client and model provider each hold a secret key and both are needed for decryption. The result is computed by the third-party server and sent back to the client and model provider. The model provider applies their secret key to get a partial decryption and sends it back to the client who is able to fully decrypt and learn the result.

The protocol in \cite{chen2019efficient} provides privacy only for the honest-but-curious and collusion models. In other words, if the server, model provider or client is only going to try to infer from the data it has learned, the privacy property is satisfied. However, if the server is malicious and violates the protocol even slightly, it can leak model parameters without being detected.
One way to do this is to encrypt the model parameters as the return value and send them to the client instead of sending the actual result of the model. In this case, the server will not be able to detect this as the data it sees is encrypted with the client's key. It follows that this attack would go completely undetected by the model provider 100\% of the time. 

A simple solution would be to have the model provider check the result before returning it to the client but this is insufficient. The output of the model applied to the private data is, itself, private data as well. Additionally, it is possible in certain applications to recreate the original data from the output \cite{mai2018reconstruction}. Therefore, there is a strong need for a solution that ensures the result was computed correctly without revealing what the output is to the model provider.

In this paper, we identify a novel attack against oblivious neural network inference and propose a secure protocol that we call Secure Oblivious Neural Network Inference, or \ourmethod, to combat it and ensure security in the presence of a dishonest server and dishonest \client/\modelowner (majority dishonest).




The contributions of the paper are as follows:


\begin{itemize}
    \item We introduce a novel attack, called the \newatk attack, to enable model stealing in the oblivious neural network inference setting.
    \item We propose a novel protocol for oblivious neural network inference that mitigates this novel attack and does not violate the client's data privacy.
    \item We evaluate our protocol in terms of the level of privacy it provides and the overhead of providing that privacy.
\end{itemize}

In Section \ref{sec:background} we provide background of the encryption schemes employed in this work. In Section \ref{sec:atk} we detail the vulnerability of the existing outsourced neural network inference protocol and the novel attack we construct. We propose a solution and prove its security in Section \ref{sec:def}. We provide related work in Section \ref{sec:rel} and closing remarks in Section \ref{sec:conclusion}.

\begin{figure*}
    \centering
    \includegraphics[width=0.65\textwidth]{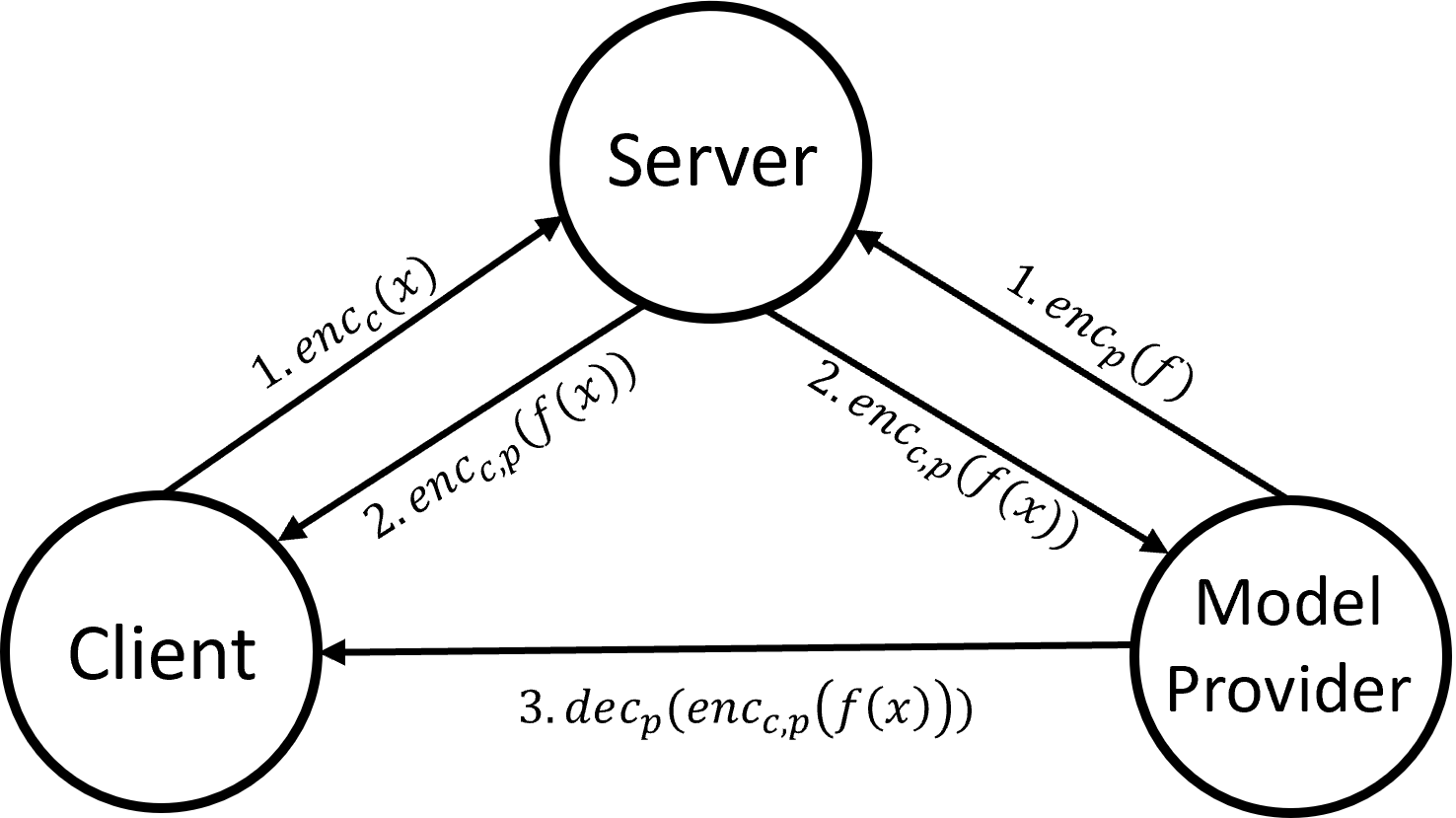}
    \caption{Standard Outsourced Oblivious Neural Network Inference Model}
    \label{fig:standard_model}
\end{figure*}
\section{Background \label{sec:background}}
\subsection{Fully Homomorphic Encryption}

Traditional encryption schemes require decryption before any data processing can occur. Homomorphic encryption (HE) allows arithmetic directly over ciphertexts without requiring access to the secret key needed for decryption, enabling data oblivious protocols.
Fully homomorphic encryption (FHE) schemes support an unlimited number of multiple types of operations. The Cheon-Kim-Kim-Song (CKKS) \cite{cheon2017homomorphic} cryptosystem supports additions as well as multiplications of ciphertexts. 
The message space of CKKS is vectors of complex numbers.

A small amount of error is injected into each ciphertext at the time of encryption to guarantee the security of the scheme. This error grows each arithmetic operation performed. Therefore, once a certain threshold of multiplications are performed the error grows too large to decrypt correctly. Although the bootstrap operation reduces this error to allow circuits of unlimited depth, for the purposes of this work no such operations are needed. The following operations are supported:
\begin{itemize}
    \item Encode: Given a complex vector, return an encoded polynomial.
    \item Decode: Given a polynomial, return the encoded complex vector.
    \item Encrypt: Given a plaintext polynomial and public key, return a ciphertext.
    \item Decrypt: Given a ciphertext and the corresponding secret key, return the underlying polynomial.
    \item Ciphertext Addition: Given two ciphertexts, or a ciphertext and a plaintext, return a ciphertext containing an approximate element-wise addition of the underlying vectors.
    \item Ciphertext Multiplication: Given two ciphertexts, or a ciphertext and a plaintext, return a ciphertext containing an approximate element-wise multiplication of the underlying vectors.
    \item Relinearization: Performed after every multiplication to prevent exponential growth of ciphertext size. This function is assumed to be executed after every homomorphic multiplication in this work for simplicity.
    \item Ciphertext Rotation: Given a ciphertext, an integer, and an optionally-generated set of rotation keys, rotate the underlying vector a specified number of times. For instance, $Rotate(Enc([1,2,3,4]),1,k_r)$ returns $Enc([2,3,4,1])$.
\end{itemize}

\subsection{Multikey Encryption}
Chen et al. \cite{chen2019efficient} extended existing FHE schemes to operate on multiple sets of secret key, public key pairs. Consider the case where there are two key pairs owned by two parties (such as the \Client and \Modelowner). Each party encrypts a value using their respective public keys. These ciphertexts, denoted by $enc_c(X)$ and $enc_p(Y)$ can be multiplied together homomorphically, resulting in $enc_{c,p}(XY)$. Note that this notation means the ciphertext is now multikey encrypted by the keys of both the \Client and the \Modelowner and that $enc_{k_1,k_2}(V) = enc_{k_2,k_1}(V)$. The following additional operations are supported:
\begin{itemize}
    \item $PartialDec(ciphertext, secret\_key)$ - given a multikey encrypted ciphertext and one of the secret keys corresponding to a public key encrypting the ciphertext, return a partial decryption of the ciphertext. This partial decryption does not yield any additional information.
    \item $Combine(partial\_decryptions)$ - given an array of partial decryptions, return the underlying plaintext value. This combine function only works if there is a partial decryption corresponding to each public key used in the multikey encryption.
\end{itemize}
\section{Proposed Model Stealing Attack \label{sec:atk}}

In this section, we demonstrate that the existing approach for using multiparty encryption to provide privacy in outsourced MLaaS systems suffers from a serious violation of privacy for the \modelowner if the server behaves maliciously and tries to leak parameters to the client. We call this attack the \newatk attack, as the adversary is  delivered the exact model parameters with little effort. Additionally, this attack will go unnoticed by the \modelowner.


The outsourced oblivious neural network inference system based on multi-key FHE \cite{mukherjee2016two} consists of three parties, the \client, the model \modelowner and the \merc (to which the computation is outsourced). 

The \client holds some private data $x$ for use as input to a parameterized function $f$. The model provider, referred to henceforth as the \modelowner, holds a private function $f$ that is comprised of private parameters. The \merc holds no private data and is instead responsible for computing $f(x)$ given encryptions of $f$ and $x$.

In this model, there are two privacy requirements: the client wants to ensure that the values of $x$ and $f(x)$ are not released to anyone while the \modelowner wants to ensure that none learns the (parameters of) function $f$. However, existing client server models typically focus on the privacy requirement of the client and ignore the privacy requirements of the provider or server \cite{boulemtafes2020review,yang2023review}.

These threat models are insufficient, considering how rampant model stealing attacks have become \cite{oliynyk2023know}. Hence, we assume the byzantine threat model where any of the parties may be byzantine. If the \merc is byzantine, it may be cooperating with either the \client (to steal the model $f$) or with the \modelowner (to steal $x$ or $f(x)$). 

In cases where the \merc is colluding with the \client, it will try to steal some/all model parameters in place of the value for $f(x)$. 
In cases where the \merc is cooperating with the \modelowner, it is in the best interest of both of them to return $f(x)$ to the \client as failure to do so would raise suspicion on the \merc and/or \modelowner and it will cause reputational/financial  damage to one or both of them.
However, the \merc and \modelowner may exchange additional messages that allow them to learn $f(x)$ or $x$



The communication between \client, \modelowner and \merc is as shown in Figure \ref{fig:standard_model} where we utilize the oblivious neural network inference protocol from \cite{chen2019efficient}. The \client encrypts their data using fully homomorphic encryption (FHE) and the \modelowner encrypts their function parameters using a different public key. These encryptions (along with the corresponding public keys) are sent to the \merc. The \merc then computes $enc_{c,p}(f(x))$ which is returned to both the \client and the \modelowner. Partial decryptions are obtained with each party's secret key and are combined by the \client to obtain the full decryption of $f(x)$.

The approach in \cite{chen2019efficient} preserves the privacy of the client as the \modelowner and \merc only sees the encrypted version of the data. It also preserves the privacy of \modelowner if the \merc is honest-but-curious, as the \merc only sees the data in encrypted format. However, if \merc is dishonest, it can collude with \client to reveal the model parameters. We discuss such attacks next. 

In step 2 of Figure \ref{fig:standard_model}, a malicious server sends an encryption of the parameters $enc_{c,p}(f)$ instead of the intended result $enc_{c,p}(f(x))$. $enc_{c,p}(f)$ can be trivially computed from $enc_p(f)$ which is obtained by the server during step 1. This ciphertext is fully decrypted only by the client during step 3, revealing private model parameters. This attacks appears identical to normal operation to the provider. In cases where there are multiple ciphertexts with model parameters, the attack may be repeated to obtain them all.

The reason we call this the \newatk attack is that this can be repeated ad infinitum without the \modelowner learning anything about the fact that the model is being compromised. This paper focuses on addressing this attack.

\section{Attack Mitigation via Oblivious Results Checking \label{sec:def}}
In this section we propose a protocol that mitigates the \newatk attack and we prove its security. First, in Section \ref{sec:problemstatement} we introduce the problem formally. Next, in Section \ref{sec:mitigationapproach} we describe our approach at a high level before going into details in Section \ref{sec:protocoldetails}. We then describe the special steps needed to prepare the ciphertext at the beginning of the computation as well as the work needed to check that the result was computed correctly. Lastly, in Section \ref{sec:security} we analyze our protocol and prove its security.

\begin{figure*}
    \centering
    \includegraphics[width=0.85\textwidth]{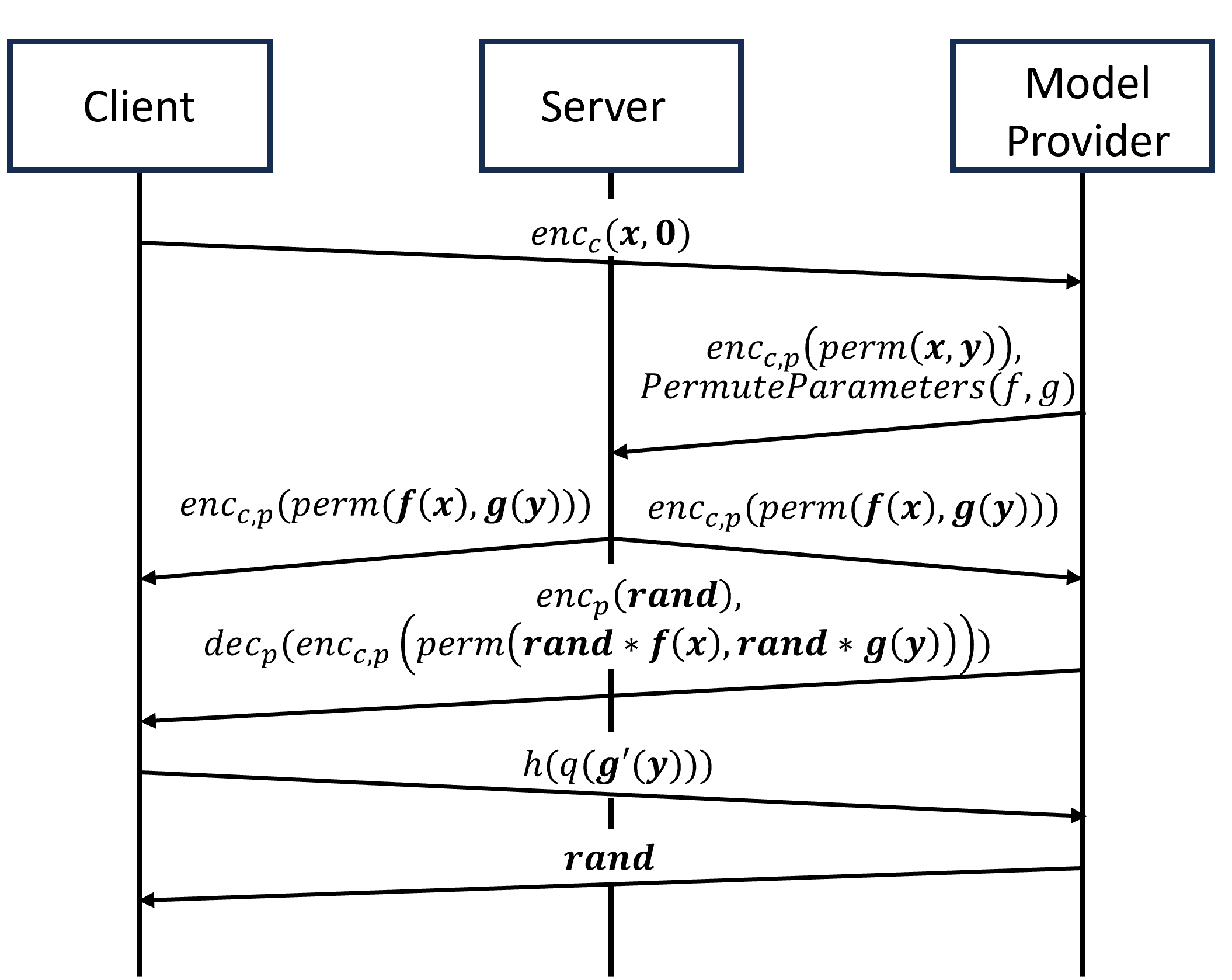}
    \caption{\ourmethod sequence diagram.}
    \label{fig:sequence}
\end{figure*}

\subsection{Problem Statement \label{sec:problemstatement}}

In this section, we define the problem statement for dealing with \newatk attack. Here, a party (\client, \merc or \modelowner) may be honest or dishonest. An honest party follows the prescribed protocol, while a dishonest party could be byzantine and behave arbitrarily. We assume that the dishonest party intends to remain hidden. Therefore, it will send messages of the required type (such as CKKS ciphertexts). However, the data inside may be incorrect. 

Recall that in the multiparty communication, the client starts with its own data, $x$, and desires to obtain a value $f(x)$ where $f$ is a function known to \modelowner. The actual computation of $f(x)$ is done by a \merc.

In the multiparty communication considered in this paper, we follow a \textit{secure with abort} approach where security is guaranteed at all times, i.e., the \modelowner or \merc never learns the client data, $x$ or the computed value $f(x)$ but if dishonest behavior is detected, the computation may be aborted. This remains true in all cases, i.e., even if the \modelowner and \merc send additional messages between them privacy will not be violated. We also want to ensure that the model parameters are never revealed to the client even if the \merc and \client collude and send additional messages between them. However, if everyone behaves honestly then the client should receive the value of $f(x)$. Thus, the privacy requirements are as follows:

\begin{description}
    \item [Liveness] Upon providing $enc_c(x)$ to \modelowner, the \client should eventually learn $f(x)$ if the \modelowner and \merc behave honestly.
    \item [\Modelowner privacy] The \client should not learn the model parameters even if the \merc is colluding with the \client.
    \item [\Client privacy] The \modelowner should not learn $x$ or $f(x)$ even if the \merc is colluding with the \modelowner. 
    \item [Oblivious Inference] The \merc should not learn input $x$, model parameters, or result $f(x)$.
    \item [Secure with Abort] If any party violates the protocol then the computation is aborted and the \client does not learn anything.
\end{description}


As stated, we follow the \textit{secure with abort} model for handling collusion between the \merc and \client, i.e., the \modelowner will abort the computation if it detects that the \merc or \client is trying to steal model parameters. In this case, the \client does not receive $f(x)$. It also does not learn any model parameters. 

In our protocol, the \client does not need to abort as there is no possibility of the \modelowner and \merc learning $x$ or $f(x)$.


\subsection{Mitigation Approach \label{sec:mitigationapproach}}


In this section, we propose our approach for providing privacy to \modelowner as well as the \client. Specifically, the approach in \cite{chen2019efficient} already provides privacy for the \client data. We ensure that the corresponding privacy is preserved while permitting the \modelowner to have control over the data being transmitted to the \client so that it can verify that the \merc is not trying to leak the \modelowner's private data (namely the model parameters) to the \client. 

Our approach is based on the observation that machine learning algorithms operate on vectors as opposed to scalars, making FHE well-suited to the task. Ciphertexts in FHE encode and encrypt entire vectors of real values. We take advantage of this fact for our mitigation strategy. We dedicate some slots of the ciphertext (and therefore some indices of the underlying vector) to verify the computation is performed correctly. These slots will not contain $f(x)$ values but rather randomly-generated values for use in a results checking step.

The basic idea of the protocol is to have the \merc compute $f(x)$ in most slots of the ciphertext and $g(y)$ in other slots, where $g$ is a function that is of the same form as $f$ (e.g., if $f$ is a linear/quadratic function then $g$ is also a linear/quadratic function with different parameters.). Here, $g$ is a random function generated at the start of the protocol by the \modelowner and $y$ is a random secret input vector also generated by the \modelowner. Upon receiving the answer from the \merc, the \modelowner will get the decryption of $g(y)$ by requesting it from the client. And, if the value of $g(y)$ matches then the \modelowner will return the result to the \client. Since $g$ and $y$ are only known to the \modelowner, if $g(y)$ is computed correctly then it provides a confidence to the \modelowner that the \merc is likely to have performed honestly. If $g(y)$ is not computed correctly, then the \modelowner would realize that either \merc or \client is being dishonest and can refuse to provide the partial decryption of the encrypted value of $f(x)$ necessary to learn $f(x)$.

We observe that the use of the extra slots for $g(y)$ reduces the number of slots available to the client. Also, the approach for providing privacy is probabilistic, i.e., there is a possibility that a dishonest \merc could use some of the slots used by $f$ to leak information without being caught. This probability will be low if the number of slots used for $g$ are high. However, in this case, the overhead will be higher as the number of slots available for $f$ would be lower. We discuss this in Section \ref{sec:security} and show that the probability of a successful model stealing attack against our system is negligible.


\subsection{Protocol Details \label{sec:protocoldetails}}

\ourmethod is built on top of the existing oblivious neural network inference protocol with added steps. The protocol is detailed in Algorithm \ref{alg:overview}.

First in step \ref{step:submit}, the \client generates a ciphertext containing the vector $x$ in the first $d$ slots of the ciphertext and sends it to the \modelowner. 
The \modelowner inserts the $y$ values in slots $d+1..d+m$, where $m$ slots are used for $y$.
Subsequently, \modelowner permutes this vector in step \ref{step:permute1}. 
It also permutes function $f$ and $g$ in a similar fashion. As an illustration if the $f$ was $x[1]+ 2x[2]$ and $(x[1],x[2])$ is permuted to $(x[2], x[1])$ then $f$ would be changed to $2x[2]+x[1]$. 

Additionally, to \textit{permute the computation}, the \modelowner can also change $f$ to be  $2x[2]+x[1] + 0*y[1]$, where $y[1]$ is one of the values used for the computation of $g(y)$. Likewise, the computation of $g$ can use some of the values in the $x$ vector with coefficient $0$. It could use any arithmetic operation that uses $y$ (respectively, $x$) while computing the value of $f(x)$ (respectively $g(y)$) in such a way that the final answer will be independent of the $y$ (respectively $x$) value. This will prevent the \merc from performing data-flow techniques to determine the slots used for $f$ and $g$. 

Since the \merc is not aware of the placement of $x$ and $y$ in the ciphertext and it uses the permuted input, it computes permuted $f(x)$ and $g(y)$ in step 6. 
This value is sent to both the \client and the \modelowner.


In step \ref{step:maskclient}, the \client uses the mask provided by the \modelowner to compute $enc_{c,p}(g'(y))$, where $g'(y)$ denotes $g(y)$ masked by a random vector known to the \modelowner (similarly $f'(x)$ denotes $f(x)$ masked by that same random vector). This value and the partial decryption provided by \modelowner is used to compute $g'(y)$.
If $g'(y)$ is computed correctly, then the model owner is convinced that the entire vector was computed correctly.

Since the computation of $g'(y)$ by the client may have a small error due to the use of FHE whereas the value computed by the \modelowner is the exact value of $g'(y)$, we use $g'_c(y)$ to denote the value computed by the \client and $g'_p(y)$ to be value computed by the \modelowner.
Each value is quantized such that $q(g'_c(y))$ will be equal to $q(g'_p(y))$ 

Then, in steps \ref{step:resultcheck1}-\ref{step:resultcheck4}, the \client and \modelowner perform a privacy-preserving comparison protocol to ensure $q(g'_c(y))$ was computed correctly. After this passes, the \client and \modelowner jointly decrypt the result ciphertext containing $f(x)$ and only the \client learns $f(x)$.
One of the concerns in this step is a malicious \modelowner that tries to compromise the privacy of the \client. Specifically, in this step, if the \modelowner specifies the incorrect indices such that the indices correspond to $f'(x)$ instead of $g'(y)$, then the \modelowner could learn the value of $f(x)$. To prevent this, the proof that the \client computes the value of $g'(y)$ correctly is achieved via a zero-knowledge approach discussed later in this section.

\begin{algorithm*}
\footnotesize
    \caption{Secure Oblivious Neural Network Inference }\label{alg:overview}
    \begin{algorithmic}[1]
        
        \State \cl sends to \pr: $enc_c(x[1],x[2],...x[d], 0,...0)$ \label{step:submit}
        
        \item[]

        \State \pr computes $enc_{c,p}(x[1],x[2],...x[d], y[1], y[2], y[m])$ \label{step:append}
        \State \pr computes $enc_{c,p}(perm(x,y))$ \label{step:permute1}
        \State \pr computes $PermuteParameters(f, g)$ using the same permutation. \label{step:permute2}
        \State \pr sends to \se: \mbox{$enc_{c,p}(perm(x,y))$, $PermutaParameters(f, g)$} \label{step:sendpermute}

        \item[]
        
        \State \se computes and sends to \pr and \cl: $enc_{c,p}(perm(f(x),g(y)))$
        \State \pr computes $rand = [-1,1]^{d+m}$
        \State \pr computes $enc_{c,p}(perm(f'(x),g'(y))) = mult(enc_{c,p}(perm(f(x),g(y))), rand)$
        \State \pr sends to \cl: $dec_{p}(enc_{c,p}(perm(f'(x),g'(y)))), enc_p(rand)$
        \State \cl computes $g'(y)$ \label{step:maskclient}  

        \item[]

        \State \cl sends to \pr: $hash_1 = h(q(g'(y)))$ \label{step:resultcheck1}
        \State \pr computes $g'(y)$ \label{step:resultcheck2}
        \State \pr computes $hash_2 = h(q(g'(y)))$ \label{step:resultcheck3}
        \State If $hash_1 \neq hash_2$, \pr aborts \label{step:resultcheck4}

        \item[]
        
        \State \pr sends to \cl: $rand$
        \State \cl computes $f(x)$

    \end{algorithmic}
\end{algorithm*}

The above protocol has two important stages, ciphertext preparation (lines \ref{step:append}-\ref{step:sendpermute}) and result checking (\ref{step:resultcheck1}-\ref{step:resultcheck4}). Next, we provide additional details of these steps. Ciphertext preparation refers to the manipulation done by the \modelowner before it sends the data to \merc to compute $f(x)$ and $g(y)$ whereas the result checking refers the the computation between the \client and \modelowner so that it can verify that $g(y)$ was computed correctly by the server.

\subsubsection{Ciphertext Preparation} 
The security of our proposed protocol relies on the \client and \merc not knowing the indices of the $y$ values. To achieve this, the \client appends $m$ zeroes to the end of their data vector $x$ before encryption. This ciphertext is sent to the \modelowner. Then, through a series of masking multiplications, rotations, and additions, the ciphertext is shuffled as shown in Algorithm \ref{alg:shuffle}. We note that we do not require arbitrary ciphertext shuffling for the security of our protocol. Our security rests on the \merc not being able to guess which slots correspond to $x$ values and which slots correspond to $y$ values. The vectors of function parameters are shuffled in the same manner prior to encryption. The \modelowner finally adds random $y$ values to the zero indices in the ciphertext.


\begin{algorithm}
    \caption{Ciphertext shuffle}\label{alg:shuffle}
    \begin{algorithmic}

        

        \State $indices \gets \emptyset$
        \For{$i=0$ to $m$}
            \State $index \gets random(0,d+m)$
            \While{$index \in indices$} 
                \State $index \gets random(0,d+m)$
            \EndWhile
            \State $indices.add(index)$
        \EndFor
        \State $indices.sort()$
        \State $i \gets 0$
        \For{$index \in indices$}
            \State $maskedvalue \gets mult(ct, e_{index})$ \Comment{}{$e_i$ denotes a zero vector with a 1 at index $i$}
            \State $maskedvalue \gets rotate(maskedvalue,d+i-index)$
            \State $ct \gets mult(ct,\mathbf{1} - e_{index})$ 
            \State $ct \gets add(ct,maskedvalue)$
            \State $i \gets i+1$
        \EndFor
        
    \end{algorithmic}
\end{algorithm}

\subsubsection{Result Checking}

If the \modelowner does not verify that the ciphertext they are decrypting does not contain model parameters, the \newatk attack will succeed. However, the \modelowner should not learn $f(x)$ so it is not permissible to have the \modelowner decrypt the result, examine it, and return it to the \client. Indeed, if this were allowed, the protocol would be open to an attack where the \modelowner works with the \merc to learn $x$ in a similar manner. Our idea is to give the \client access only to the computed $g(y)$. If the \client is able to correctly tell the \modelowner the values of $g(y)$, then the \modelowner will help the \client decrypt the final result and gain access to $f(x)$.





The \modelowner generates a random vector $rand \in \mathbb{R}^{d+m}$ that contains in each index a random nonzero real value. This vector is sent to the \client. Both the \modelowner and the \client compute:

\begin{flalign} 
enc_{c,p}&(perm(f'(x),g'(y))) \nonumber\\ 
&= mult(enc_{c,p}(perm(f(x),g(y))), rand) \nonumber
\end{flalign} 


They also compute a partial decryption of this resultant ciphertext. The \modelowner sends their partial decryption to the \client that combines the partial decryptions to recover $g'(y) = g(y) \times rand$ (and additionally $f'(y) = f(y) \times rand$). 

The \modelowner computes $g'(y)$, as they know $g$, $y$, and $rand$ due to having generated each in plaintext at the start of the computation. Now, the \modelowner needs to be convinced that they have the same value of $g'(y)$ as the \client without revealing these values to each other. If the \client were to send $g'(y)$ to the \modelowner, the \modelowner could take advantage of this by storing $f(x)$ or $x$ in lieu of $g'(y)$ (or an obfuscated version thereof). Similarly, the \modelowner cannot simply send $g'(y)$ to the \client because the \client could simply lie and state that the values are the same.


A secure two-party computation protocol is needed to prove to the \modelowner that the values of $g'(y)$ are the same without revealing those values to the other. This protocol can work on plaintext values because $g'(y)$ exists in plaintext to both parties at this point in the protocol (as decryption has already occurred). Both parties compute $h(g'(y))$ for some low-collision hash function $h$. The \client sends their result to the \modelowner and the \modelowner verifies that the hashes are identical.

Due to the approximate nature of arithmetic under FHE, the value of $g'(y)$ decrypted by the \client is likely to be slightly different than the value computed in plaintext by the \modelowner. To mitigate this, we use a simple uniform quantization scheme with intervals large enough that the noise from homomorphic computations will not affect the performance. Therefore, the \client and \modelowner compute $h(q(g'(y)))$ instead of $h(g'(y))$ for a quantization scheme $q$. 


Once the \modelowner is convinced that $g'(y)$ was computed correctly, they share the random vector $rand$ with the \client to for use in recovering $f(x)$.

\subsection{Security Analysis \label{sec:security}}

\begin{figure*}
    \centering
    \includegraphics[width=0.49\textwidth]{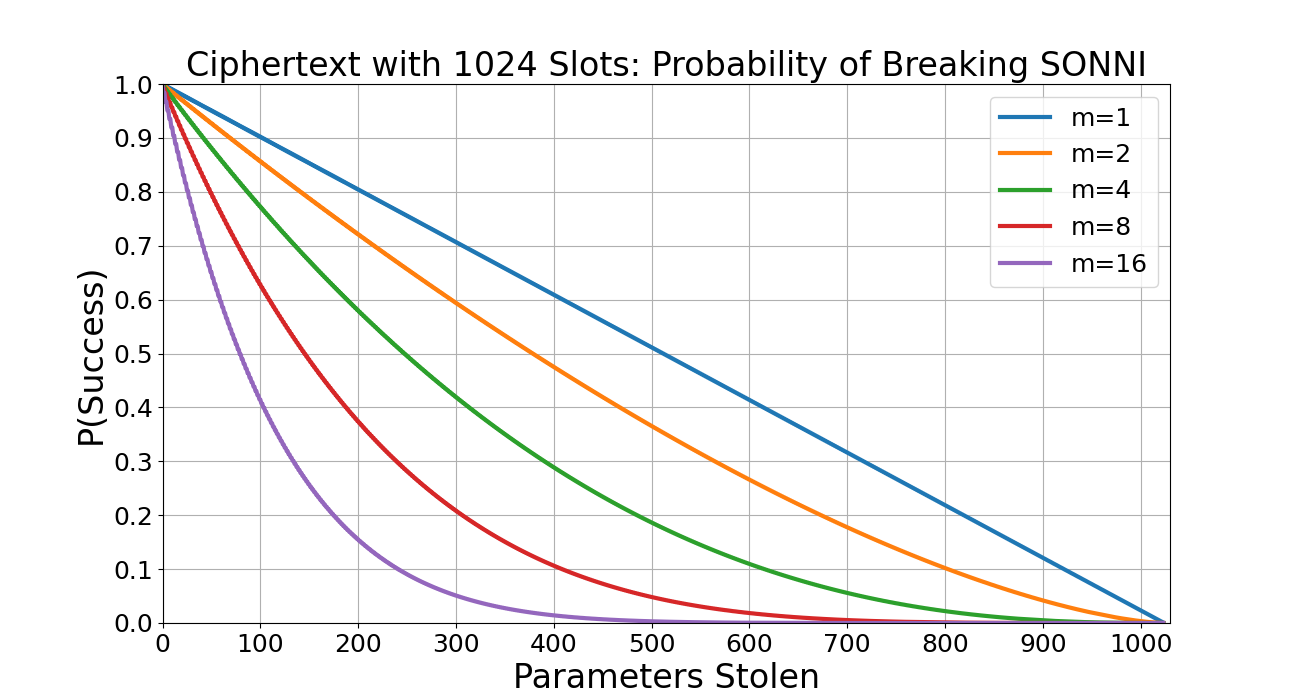}
    \includegraphics[width=0.49\textwidth]{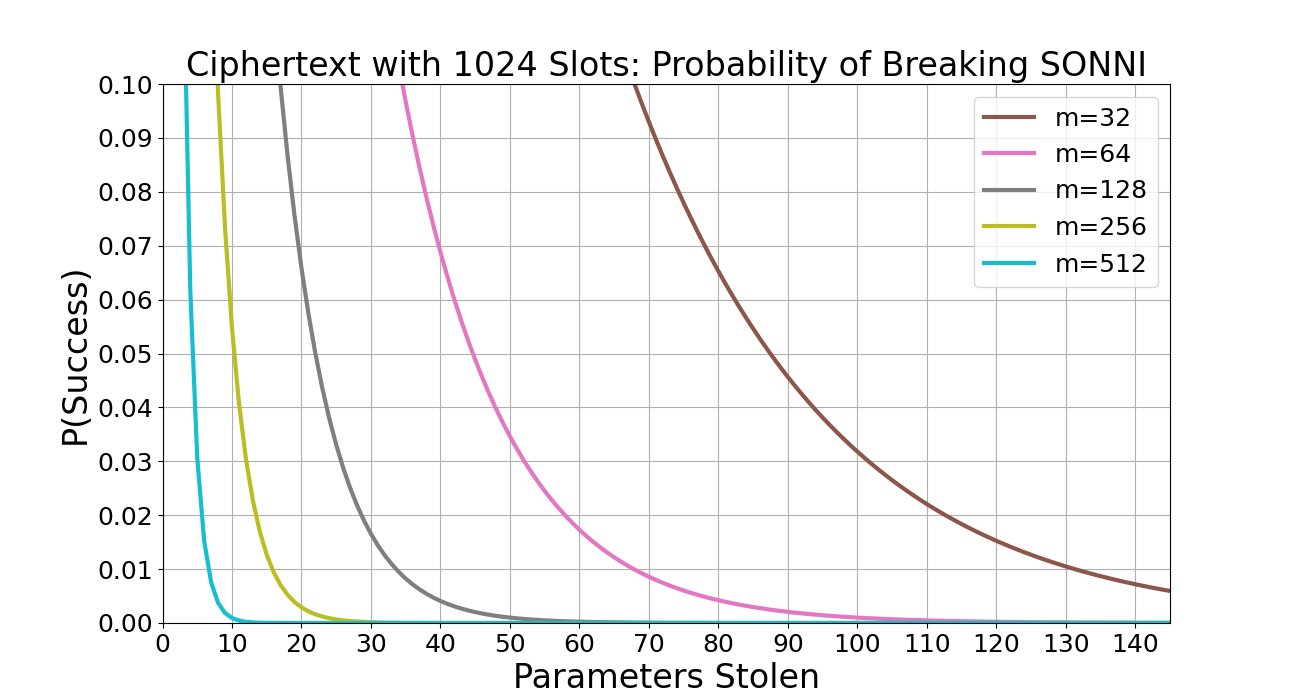}
    \caption{Probability of performing a \newatk attack without being detected based on the number of ciphertext slots dedicated to checked values.}
    \label{fig:smuggling_success}
\end{figure*}

In this subsection, we evaluate the theoretical probabilities of attacks successfully working against the proposed system. We refer to the probability of breaking the homomorphic encryption scheme as $\epsilon_1$ and the probability of breaking the incorporated hash function as $\epsilon_2$. These values are negligible, given proper selection of encryption security parameters and hash function.

\begin{theorem} \label{theorem:silverplatter}
Let $d$ be the dimensionality of the \client's secret data $x$ and $m$ be the number of ciphertext slots dedicated to containing $y$ values. The probability of successfully performing the \newatk attack against \ourmethod to steal $k$ scalar model parameters is upper bounded by ${(\frac{d}{d+m})}^k + \epsilon_1 + \epsilon_2$.
\end{theorem}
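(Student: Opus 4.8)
The plan is to define \emph{attack success} as the joint event that the \client recovers at least $k$ genuine model parameters while the hash comparison of step \ref{step:resultcheck4} passes (so the \modelowner does not abort), and to bound its probability by first peeling off the two cryptographic failure modes and then analyzing a purely combinatorial guessing game.

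First I would isolate the events in which a primitive is broken. The permutation that interleaves the $d$ result slots with the $m$ checking slots is chosen by the \modelowner and is revealed to the \merc only inside ciphertexts encrypted under the key held by the \modelowner; by semantic security of the (multikey) CKKS scheme, any \merc strategy that extracts nonnegligible information about which slots carry $g(y)$ yields a distinguisher, so the event ``the adversary learns the checking-slot positions'' has probability at most $\epsilon_1$. Likewise, the only route by which a tampered $g(y)$ survives the check $hash_1 = hash_2$ is a collision of the low-collision hash $h$, an event of probability at most $\epsilon_2$. A union bound lets me set these two events aside and argue in the world where neither primitive fails.

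In that world the $k$ slots the adversary overwrites with parameters are chosen with no information about the true checking-slot set, so the checking slots form, from the adversary's view, a uniformly random $m$-subset of the $d+m$ slots (as induced by the shuffle of Algorithm \ref{alg:shuffle}) independent of the adversary's choice. Overwriting any checking slot replaces a true $g(y)$ entry by a parameter value, so generically it changes the quantized vector $q(g'(y))$ and, absent the hash collision already charged to $\epsilon_2$, forces $hash_1 \neq hash_2$ and an abort. Hence undetected leakage requires the $k$ overwritten slots to be disjoint from all $m$ checking slots; using more than $k$ slots only lowers this probability, so the attacker's optimum is exactly $k$ slots. The probability that a fixed $k$-subset misses a uniformly random $m$-subset is $\binom{d}{k}/\binom{d+m}{k}$, and I would bound it by $\left(\frac{d}{d+m}\right)^k$ via the elementary observation that each factor $\frac{d-i}{d+m-i}$ is nonincreasing in $i$ and therefore at most $\frac{d}{d+m}$. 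Adding the three contributions yields the claimed $\left(\frac{d}{d+m}\right)^k + \epsilon_1 + \epsilon_2$.

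I expect the main obstacle to be the first reduction: arguing cleanly that the \merc learns nothing exploitable about the permutation. The \merc does see the permuted parameter vectors in step \ref{step:sendpermute}, not merely the encrypted permuted input, so I must show these carry no positional signal. This is exactly the role of the zero-coefficient padding described in Section \ref{sec:protocoldetails} (making $f$ formally reference $y$-slots and $g$ formally reference $x$-slots), which renders the data flow independent of which physical slots are checking slots. A secondary point I would make explicit rather than leave implicit is the genericity assumption behind ``overwriting a checking slot changes $q(g'(y))$'': since $y$ and the mask $rand$ are random and the quantization intervals are fixed, the chance that a leaked parameter quantizes to the same bucket as the true $g(y)$ entry is negligible and can be folded into the bound.
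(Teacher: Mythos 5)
Your proposal is correct and takes essentially the same route as the paper's own proof: both set aside the encryption-breaking ($\epsilon_1$) and hash-breaking ($\epsilon_2$) events, then reduce the attack to blindly guessing which of the $d+m$ slots are unchecked, using the same monotonicity fact $\frac{d-i}{d+m-i} \le \frac{d}{d+m}$ --- the paper phrases this as the optimal adversary stealing one parameter per transaction at probability $\frac{d}{d+m}$ each, while you phrase it as the single-ciphertext hypergeometric bound $\binom{d}{k}/\binom{d+m}{k} \le \left(\frac{d}{d+m}\right)^k$, and these are equivalent (your version needs only the extra remark that independence across transactions makes the per-round bounds multiply). Your explicit caveats --- that the zero-coefficient padding of the computation is what blocks data-flow inference of slot positions, and that a smuggled parameter could coincidentally quantize into the same bucket as the true $g'(y)$ entry (a value coincidence distinct from a hash collision, which the paper's proof does not address) --- are points of added rigor rather than a different approach.
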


\begin{proof}
The probability of successfully performing the attack depends on the \merc passing the result checking step of the protocol. The \modelowner performs the ciphertext shuffle locally and, due to ciphertext indistinguishability, the \merc is unable to differentiate any permutations. Therefore, the \merc must guess which slots of the ciphertext will not be checked during the result checking step. To hide a single scalar value in the result ciphertext without being found out results in one of the $d$ slots dedicated to the results being selected instead of one of the $m$ slots, or a probability of ${\frac{d}{d+m}}$. Stealing a second value in this ciphertext is therefore a probability of ${\frac{d-1}{d+m-1}}$ which is less than the original probability, so the optimal strategy is to independently steal one parameter per result ciphertext for a total probability of ${(\frac{d}{d+m})}^k$ for $k$ model parameters. The only other ways to successfully perform the attack are to break the hash function or the homomorphic encryption scheme, thus the theorem is proved. We note that no modifications may be made to the ciphertext by the \client, lest the decryption fail (from the \client and \modelowner attempting to partially decrypt different ciphertexts).
\end{proof}

\begin{theorem}
The probability of the \modelowner working with the \merc to successfully steal any amount of \client data $x$ or $f(x)$ is upper bounded by 
$\epsilon_1 + \epsilon_2$.
\end{theorem}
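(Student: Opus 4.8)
The plan is to show that the only avenues available to a colluding \modelowner and \merc for learning $x$ or $f(x)$ are precisely (i) breaking the homomorphic encryption scheme and (ii) breaking the hash function, so that the success probability is bounded by $\epsilon_1 + \epsilon_2$. First I would enumerate every message that the \client sends or that touches \client secrets during the protocol, and argue that each is either information-theoretically hidden or protected by one of the two cryptographic primitives. The key observation is that the final result ciphertext $enc_{c,p}(perm(f(x),g(y)))$ is jointly encrypted under both the \client's and the \modelowner's keys, so a \modelowner/\merc coalition still lacks the \client's secret key; obtaining $f(x)$ from it would require the \client's partial decryption, which the \client never provides. Thus learning $f(x)$ directly from the result ciphertext reduces to breaking the multikey encryption scheme, contributing the $\epsilon_1$ term.

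Next I would scrutinize the one message the \client does reveal in plaintext form, namely the hash $hash_1 = h(q(g'(y)))$ sent in step \ref{step:resultcheck1}. This is the crux of the argument and the step I expect to be the main obstacle: a malicious \modelowner, by supplying crafted $rand$ values or by claiming that certain result slots correspond to $g(y)$ when they actually correspond to $f(x)$, might try to steer the \client into hashing a masked version of $f(x)$ rather than $g(y)$. I would argue that since $h$ is a low-collision hash, the \modelowner learns nothing about the preimage from the hash value itself except by guessing or by finding a collision, and recovering the underlying quantized value $q(g'(y))$ (or $q(f'(x))$) from $hash_1$ is exactly a hash-inversion/collision event bounded by $\epsilon_2$. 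The \emph{zero-knowledge} character of this exchange—referenced in Section \ref{sec:protocoldetails}—is what guarantees that even an adversarially chosen index set yields the \modelowner no usable information about the plaintext beyond what a successful hash attack would give.

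Having isolated these two channels, I would then confirm that no other message leaks \client data: the initial ciphertext $enc_c(x)$ in step \ref{step:submit} is single-key encrypted and hides $x$ under $\epsilon_1$; the partial decryption $dec_p(\cdot)$ the \modelowner computes is its own output and by the stated property of \texttt{PartialDec} yields no additional information without the complementary partial decryption; and the masking vector $rand$ is generated by the \modelowner itself, so revealing it leaks nothing about $x$. I would finally combine the two bounds by a union bound over the two distinct failure events, giving the total $\epsilon_1 + \epsilon_2$. The subtlety to handle carefully is ensuring the two events are the \emph{only} ways to win—i.e., that a \modelowner cannot adaptively combine a partial hash attack with partial information from the masked ciphertext to do better than attacking each primitive separately—which I would dispatch by noting that the $rand$ masking renders the revealed $g'(y)$ (and any $f'(x)$ the \modelowner might try to substitute) statistically independent of the unmasked values from the \client's standpoint, so the channels do not compose into a stronger attack.
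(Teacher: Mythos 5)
Your proposal is correct and follows essentially the same route as the paper's own (much terser) proof: the coalition sees only encryptions and hashes, the single plaintext channel is $hash_1 = h(q(g'(y)))$, the hash-substitution attack you identify (steering the \client into hashing $q(f'(x))$ so the \modelowner can recover it by breaking $h$) is exactly the attack the paper cites as the source of the $\epsilon_2$ term, and the union bound with the encryption-breaking event gives $\epsilon_1 + \epsilon_2$. One aside in your final paragraph is incorrect, though harmless to the argument: since $rand$ is generated and known by the \modelowner, it provides no statistical hiding whatsoever against this coalition---the only protection for the client-derived plaintext value in that exchange is the hash itself, which is precisely why the bound must charge $\epsilon_2$ rather than appeal to any masking-based independence.
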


\begin{proof}
The \modelowner and \merc together only learn encryptions and hashes during the protocol. If the \modelowner is able to break the hash, a new attack is enabled. The \merc can manipulate the results ciphertext such that the \client returns $h(q(f'(x))))$ in place of $h(q(g'(x))))$ to the \modelowner who can then learn $q(f'(x))$ via breaking the hash. Therefore, being able to break either a hash or an encryption is the only way to learn new information about $x$ or $f(x)$, proving the theorem.
\end{proof}

\begin{table}[ht]
\small
    \centering

        \caption{Probability of successful \newatk attack against \ourmethod. Vector size of $1024$ is used. \label{tab:probs}}
    
    \begin{tabular}{|c|c|c|}
    
    \hline
    m & Parameters Stolen & $P(Success)$\\

    \hline
    \multirow{4}{*}{4} & 10 & 0.952 \\
     & 128 &  $0.513$\\
     & 256 &  $0.237$\\
     & 512 &  $0.0311$\\
    \hline
    \multirow{4}{*}{32} & 10 & $0.720$ \\
     & 128 &  $0.011$\\
     & 256 &  $6.34 \times 10^{-5}$\\
     & 512 &  $7.07 \times 10^{-11}$\\
    \hline
    \multirow{4}{*}{512} & 10 &  $9.25 \times 10^{-4}$\\
     & 128 &  $2.88 \times 10^{-43}$\\
     & 256 &  $1.02 \times 10^{-96}$\\
     & 512 &  $1.23 \times 10^{-307}$\\
    \hline
    
    \end{tabular}

\end{table}

In Figure \ref{fig:smuggling_success}, we analyze Theorem \ref{theorem:silverplatter} quantitatively. We assume in this analysis that the attack will be performed in one round, rather than stealing one parameter per transaction (as that becomes incredibly costly from a financial perspective when the number of parameters reaches thousands or millions). In FHE applications it is a common practice to use 128-1024 size vectors \cite{jindal2020secure}. We consider the case where the ciphertext has 1024 slots, some of which ($d$) are allocated to the \client and some ($m$) are allocated to the \modelowner.
When $m=1$, i.e., only one slot is allocated to the \modelowner for verification of $g(y)$, the probability of a successful \newatk is quite high.
However, even with a few slots for the \modelowner, the probability of a successful \newatk decreases sharply. Table \ref{tab:probs} demonstrates this. With $m=32$, i.e., $32$ slots allocated to \modelowner, the probability of stealing 10 parameters is 0.$720$, the probability of stealing 128 parameters is $0.011$, and so on. The probability of stealing $512$ parameters is $7.07 \times 10^{-11}$. We note that modern machine learning models often include thousands or millions of parameters. The simple projection matrix model featured in \cite{sperling2022heft} includes $32768$ parameters. Even if the adversary maximizes their chance of successfully performing the attack by stealing a single model parameter at a time over $32768$ transactions and if only $2$ ciphertext slots dedicated to verification of $g(y)$, the probability of stealing all these parameters is $1.51 \times 10^{-28}$. Stealing just half of the parameters has a chance of $1.23 \times 10^{-14}$ of success. Batching capability is reduced by only $0.2\%$ while privacy is maintained with only a negligible chance of being violated.

In the case where even a small number of model parameters leaking is undesirable, using an m value of $512$ ensures that \newatk attacks are extremely unlikely to be successful. Even stealing 10 in this circumstance has only a $9.25 \times 10^{-4}$ chance of succeeding.

We note that a failed attempt at model stealing in this circumstance is more dire than a failed attempt at breaking an encryption scheme, for instance.
It is anticipated that the \merc and \modelowner will have a financial relationship which would be compromised if the \merc is found to be dishonest. Also, the \merc is likely to be a company that provides a server farm. That means that risking this relationship for dishonest behavior that is very likely to be caught is not worthwhile. This may risk legal or financial consequences.

Also, our protocol preserves \client privacy. This is especially important for a \client as its data, $x$ or $f(x)$, is often impossible to replace, as the data corresponds to medical information, biometric information, etc. By contrast, revealing a very small number of parameters does not significantly affect the \modelowner. 
As discussed in Section \ref{sec:atk}, the \newatk is undetectable for the \modelowner if one uses the existing approach \cite{chen2019efficient}. With the level of security provided to both the \client and \modelowner, we can provide privacy guarantees to both of them with a reasonable overhead.

\section{Related Work \label{sec:rel}}


Oblivious neural network inference aims to evaluate machine learning models without learning anything about the data and has seen much attention in recent years \cite{mann2023towards}. XONN achieves this with a formulation similar to garbled circuits \cite{riazi2019xonn}. There is a privacy concern related to outsourcing these computations. Namely, that the model will be learned by the server. FHE has been used for inference that not only protects sensitive client data, but also prevents learning the \modelowner's private data (i.e. the model) \cite{chen2019efficient}. As we discuss thoroughly in this work, this approach is open to the undetectable \newatk Attack. Our work protects not only client data, as is expected in the classical oblivious neural network inference formulation, but also that of the \modelowner.

Model stealing attacks aim to learn models from black-box access. Broadly speaking, this either attempts to steal exact model values or general model behavior. Our work examines the former scenario. This falls into three categories: stealing training hyperparameters \cite{wang2018stealing}, stealing model parameters \cite{lowd2005adversarial,reith2019efficiently}, and stealing model architecture \cite{oh2019towards,yan2020cache,hu2019neural}, often using side-channel attacks. 

Gentry's seminal work constructing the first FHE scheme \cite{gentry2009fully} paved the way for more privacy-preserving applications on the cloud. Concerns over data privacy led to the development of multi-key encryption featuring multiple pairs of public and private keys. The first use of multi-key FHE was proposed by López et al. \cite{lopez2012fly}. Work followed suit, improving speed as well as supporting more modern schemes \cite{chen2019efficient,chen2019multi,ananth2020multi}. These schemes are developed for secure cloud computation. Specific optimizations have been proposed for federated learning \cite{ma2022privacy}.

Threshold HE schemes requiring a subset of key holders collaborating to decrypt ciphertexts have been devised as early as 2001 \cite{cramer2001multiparty,damgaard2003universally}. Threshold FHE schemes have also seen popularity over the years both shortly after the advent of FHE \cite{myers2011threshold} and in recent years using modern FHE schemes \cite{jain2017threshold,sugizaki2023threshold,chowdhury2022efficient,jain2017threshold}. As we demonstrate in this paper, these techniques alone are not sufficient to ensure data privacy in the outsourced server setting, as these methods do not verify that the proper value is being computed prior to decryption.

Collusion attacks are often not considered in the context of FHE systems, as they tend to use the client-server model and only account for the privacy of the client, even in the biometrics setting where privacy is highly desired \cite{engelsma2022hers,drozdowski2019application,sperling2022heft}. Our work shows that assuming the honest-but-curious threat model for the participants in the MLaaS setting opens the way for undetectable and 100\% successful attacks.

\section{Conclusion \label{sec:conclusion}}



In this paper, we demonstrated that the current approach to oblivious neural network inference \cite{chen2019efficient} suffers from a simple attack, namely the \newatk attack. It permits the \merc to reveal all model parameters to the \client in a way that is completely undetected by the \modelowner. 

We also presented a protocol to deal with the \newatk attack. The protocol relied on the observation that FHE based techniques utilize vector-based computation. We used a few slots in this vector to compute a function $g(y)$ that could be verified by the \modelowner so that it can gain confidence that the \merc is not trying to steal the model parameters with the help of the client. Our protocol provided a tradeoff between the overhead (the number of slots used for verification) and the probability of a successful attack. We demonstrated that even with a small overhead in terms of the number of slots used for verification, the probability that a \merc can steal several parameters remains very small. Modern machine learning models often involve billions of parameters. We showed that the probability of stealing even 1\% of the parameters is very small.


Future work directions include finding a modified shuffle method that allows the evaluation of functions that incorporate rotations to work on shuffled ciphertexts. Some matrix-vector multiplication methods require rotation, for instance, which is not supported by this work \cite{halevi2014algorithms}. Extensions of this work could also include devising a protocol that does not increase the multiplicative depth of the circuit, leading to a reduced overhead. Our protocol is secure against majority dishonest attacks, but it is not clear if it is the \merc or the \client who is dishonest when the results checking step fails. Future work can tackle the problem of what to do logistically when the check fails. If the \merc is malicious it should indeed be blacklisted, but this should not open the way to a dishonest \client attempting to get an honest \merc blacklisted. We leave that as an open problem.



\bibliography{main}

\end{document}